\let\proof\relax
\let\endproof\relax
\begin{document}
	
\title{An Exact Algorithm for a Fuel-Constrained Autonomous Vehicle Path Planning Problem}

\author{Kaarthik Sundar \and Saravanan Venkatachalam \and Sivakumar Rathinam}


\institute{K. Sundar \at
Post-doctoral Researcher\\
Center for Non-Linear Studies, Los Alamos National Laboratory, NM 87544.\\
\email{kaarthik01sundar@gmail.com}           
\and
S. Venkatachalam \at
Assistant Professor \\
Dept. of Industrial and Systems Engg., Wayne State University, Detroit, MI 48202. 
\and
S. Rathinam \at
Associate Professor \\
Dept. of Mechanical Engg., Texas A\&M University, College Station, TX 77843.\\
}

\date{Received: date / Accepted: date}

\maketitle

\begin{abstract}
This paper addresses a fuel-constrained, autonomous vehicle path planning problem in the presence of multiple refueling stations. We are given a set of targets, a set of refueling stations, and a depot where $m$ vehicles are stationed. The vehicles are allowed to refuel at any refueling station, and the objective of the problem is to determine a route for each vehicle starting and terminating at the depot, such that each target is visited by at least one vehicle, the vehicles never run out of fuel while traversing their routes, and the total travel cost of all the routes is a minimum. We present four new mixed-integer linear programming formulations for the problem. These formulations are compared both analytically and empirically, and a branch-and-cut algorithm is developed to compute an optimal solution. Extensive computational results on a large class of test instances that corroborate the effectiveness of the algorithm are also presented. 

\end{abstract}

\keywords{fuel constraints \and green vehicle routing \and electric vehicles \and mixed-integer linear programming \and branch-and-cut
}

\section{Introduction}
\label{sec:intro}
Increasing concerns about climate change and rising green house gas emissions drive the research in sustainable and energy efficient mobility. One such example is the research involving a possible introduction of self driving, electrically-powered vehicles in the future. One of the main operational challenges for autonomous electric vehicles in transport applications is their limited range and the availability of recharging stations  \cite{Schneider2014,Hiermann2014}. As of November 2015, the number of electric stations in the US is a mere 9,571 with a total of 24,631 charging outlets \cite{USDOE}. Fig. \ref{fig:map} shows a map with the locations of electric recharging stations in Texas, USA; observe that the distribution of electric stations is very sparse except in the four major cities of Dallas, Houston, Austin, and San Antonio. Successful adoption of autonomous electric vehicles will strongly depend on methods to alleviate the range and recharging limitations. 

\begin{figure}
\centering
\includegraphics[scale=0.28]{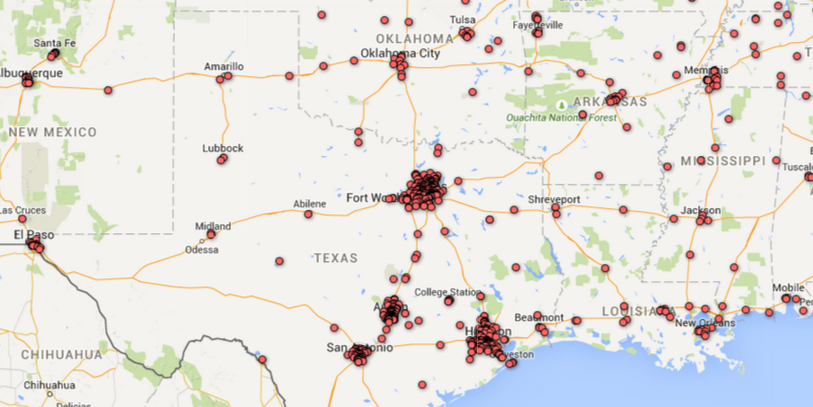}
\caption{Electric station locations in Texas, USA \cite{USDOE}}
\label{fig:map}
\end{figure}

Another example of sustainable and energy efficient mobility deals with vehicle routing applications involving green vehicles. Green vehicle routing problem is a variant of the Vehicle Routing Problem (VRP) and was introduced by authors in \cite{Erdougan2012} to account for the challenges associated with operating a fleet of alternate-fuel vehicles (AFVs). The US transportation sector accounts for 28\% of national greenhouse gas emissions \cite{USEPA}. Several efforts over many decades focusing towards the introduction of cleaner fuels (e.g. ultra low sulfur diesel) and efficient engine technologies have lead to reduced emissions and greater mileage per gallon of fuel used. Many government organizations, municipalities, and private companies are converting their fleet of vehicles to AFVs either voluntarily to alleviate the environmental impact of fossil based fuels or to meet environmental regulations. For instance, FedEx, in its overseas operations, employs AFVs that run on bio-diesel, liquid natural gas or compressed natural gas. 

In both the aforementioned applications, the following \textbf{Fuel-Constrained Autonomous Vehicle Path Planning Problem} (FCAVPP) naturally arises in order to efficiently manage a team of autonomous vehicles: Given a set of targets, a set of refueling stations, and a set of vehicles (AFVs or electric vehicles) stationed at a depot, find a route for each vehicle starting and ending at the depot such that (i) each target is visited at least once by some vehicle, (ii) no vehicle runs out of fuel as it traverses its path, and (iii) the total cost of the routes for the vehicles is minimized. We assume that whenever a vehicle visits a refueling station, it refuels to its capacity. 

In this paper, we formulate the FCAVPP as a combinatorial optimization problem. In particular, we introduce and compare four different mixed-integer linear programming (MILP) formulations for the FCAVPP. We use a well-known idea in combinatorial optimization called ``lifting'' \cite{Gomory1969} to compare the formulations analytically. The first two formulations are arc-based, and the rest are node-based formulations that use the Miller-Tucker-Zemlin (MTZ) constraints \cite{MTZ1960}. The arc-based and edge-based formulations have additional decision variables for each edge and vertex, respectively, to impose the fuel constraints on the vehicles. We then develop a branch-and-cut algorithm to compute an optimal solution to the FCAVPP based on the formulations. The importance of developing such formulations is two-fold: tight formulations provide lower bounds to the optimal objective value in the case of a minimization problem and this can in turn lead to an efficient way of benchmarking various heuristics to compute feasible solutions to the problem and secondly, heuristic solutions that are close to the optimal value when provided to such tight formulations can always lead to faster convergence of the branch-and-cut algorithm to the optimal solution of the original problem in very less computation time, typically within a few minutes (see \cite{Sundar2014}). In summary, the major contributions of this paper are as follows: (1) present four new MILP formulations for the FCAVPP, (2) compare the formulations both analytically and empirically, and (3) show through extensive computational experiments that instances with maximum of 40 targets are within the computational reach of a branch-and-cut algorithm based on the best of the four formulations.

\section{Related work}
\label{sec:litreview}
The FCAVPP is NP-hard because it is a generalization of the Traveling Salesman Problem (TSP). The existing literature on the FCAVPP is quite scarce. The single vehicle variant of the FCAVPP was first introduced by authors in \cite{Khuller2007}. When the travel costs are symmetric and satisfy the triangle inequality, authors in \cite{Khuller2007} provide an approximation algorithm for this variant. They assume that the minimum fuel required to travel from any target to its nearest depot is at most equal to $F\alpha/2$ units, where $\alpha$ is a constant in the interval $[0,1)$ and $F$ is the fuel capacity of the vehicle. This is a reasonable assumption as, in any case, one cannot have a feasible tour if there is a target that cannot be visited from any of the depots. Using these assumptions, Khuller et al. \cite{Khuller2007} present a $(3(1+\alpha))/(2(1-\alpha))$ approximation algorithm for the problem. Authors in \cite{Sundar2012,Sundar2014,Levy2014,Mitchell2015,Sundar2015} address variants of the FCAVPP in the context of path-planning for unmanned aerial vehicles. In particular, authors in \cite{Sundar2012} formulate the single vehicle variant as a MILP and present $k$-opt based exchange heuristics to obtain feasible solutions within $7\%$ of the optimal, on an average. They also provide the MILP formulation to off-the-shelf commercial solvers to obtain an optimal solution to the single vehicle variant and observe that the solvers are not able to handle their formulation for instances with more than 25 targets. Later, Sundar et al. \cite{Sundar2014} extend the approximation algorithm in \cite{Khuller2007} to the asymmetric case and also present heuristics to solve the asymmetric version of this variant. Furthermore, variable neighborhood search heuristics for the FCAVPP with heterogeneous vehicles, \emph{i.e.,} vehicles with different fuel capacities, are presented by Levy et al. \cite{Levy2014}. More recently, an approximation algorithm and heuristics are developed for FCAVPP in \cite{Mitchell2015}. The authors in \cite{Mitchell2015} extend the MILP proposed in \cite{Sundar2012} to the multiple vehicle setting. Similar to \cite{Sundar2012}, they also observed that the off-the-shelf mixed-integer solvers had difficulty computing an optimal solution for reasonable size test instances. To the best of our knowledge, there is no work in the literature that focuses on developing efficient formulations to obtain an optimal solution to even the single vehicle variant of the FCAVPP, let alone the FCAVPP.

Variants of the classic VRP that are closely related to FCAVPP include the distance constrained VRP \cite{Laporte1984,Li1992,Kara2010,Kara2011,Nagarajan2012}, the orienteering problem \cite{Fischetti1998,Vansteenwegen2011}, and the capacitated version of the arc routing problem \cite{Ghiani2004,Polacek2008}. The distance constrained VRP is a special case of FCAVPP with a single vehicle and single depot that can be considered as a fuel station. The FCAVPP is also quite different and more general compared to orienteering problem where one is interested in maximizing the number of targets visited by the vehicle subject to its fuel constraints. Lastly, the arc routing problem is a single depot VRP given a set of intermediate facilities, and the vehicle has to cover a subset of edges along which targets are present. The vehicle is required to collect goods from the targets as it traverses the given set of edges and unloads the goods at the intermediate facilities. The goal of this problem is to find a tour of minimum length that starts and ends at the depot such that the vehicle visits the given subset of edges, and the total amount of goods carried by the vehicle does not exceed the capacity of the vehicle along the tour. One of the key differences between the arc routing problem and FCAVPP is that there is no requirement that any subset of edges must be visited in FCAVPP.

There is also a class of online path planning algorithms in the literature that is very different from the FCAVPP; the difference is elaborated in this paragraph. Path planning for unmanned vehicles, aerial or ground, is typically performed at two levels. The FCAVPP and other variants of the VRP and TSP discussed in the previous paragraph are higher-level path planning algorithms. These problems are also refered to as routing problems \cite{Toth2001,Sundar2016,Sundar2016a} in the literature. These problems are of a combinatorial nature and are, in general, NP-Hard. Hence, these problems are solved offline, typically an hour before the actual mission. Furthermore, other issues like obstacle avoidance, closed-loop control, etc. are not taken into account at this level owing to the difficulty of solving these problems without these constraints. The lower level path planning and control algorithms then solve the problem of the trajectory generation for the vehicle to travel from one target to another in the presence of obstacles and compute the closed-loop control signals to follow the trajectory with minimum error; these algorithms are more real-time or online algorithms. Potential field methods, $A^*$ search algorithm, and Rapidly-exploring Random Tree (RRT) \cite{Latombe2012} are a few of the commonly used lower-level online path planning algorithms. The readers are refered to \cite{Yu2015} for a recent review of other state-of-the-art lower level path planning algorithms.

The remainder of the paper is organized as follows. Sec. \ref{sec:definition} states the formal definition of the problem and introduces the notations. Sec. \ref{sec:lifting} introduces the mathematical preliminaries on ``lifting for MILPs'' that is used in parts of this paper for strengthening a given MILP formulation. In Sec. \ref{sec:formulation}, we develop the four MILP formulations. The first two formulations are arc-based and the rest are node-based formulations \emph{i.e.,} decision variables for enforcing the fuel constraints are introduced for each edge and each target for the arc-based and the node-based formulations, respectively. The linear programming relaxations of the formulations are analytically compared in this section. Then, in Sec. \ref{sec:bandc} we detail the main steps involved in developing a branch-and-cut algorithm to solve the four formulations. Finally, in Sec. \ref{sec:results}, we present the computational results followed by conclusions and possible extensions.

\section{Problem definition \label{sec:definition}}
Let $T$ denote the set of targets $\{t_1,\dots,t_n\}$ and let $d_0$ denote the depot. Let $D$ denote the set of refueling stations $\{d_0, d_1,\dots, d_k\}$, and $d_0$ is the depot where $m$ vehicles are initially stationed with each vehicle fueled to its capacity. We assume that the vehicles cannot refuel at the depot. The FCAVPP is defined on a directed graph $G=(V,E)$ where $V=T\cup D$, and $E$ is the set of edges joining any two vertices in $V$; $E$ does not contain any self-loops. We assume that $G$ does not contain any self-loops. Each edge $(i,j) \in E$ is associated with a non-negative cost $c_{ij}$ required to travel from vertex $i$ to vertex $j$, and $f_{ij}$ is the fuel spent by traveling from $i$ to $j$. It is assumed that the cost of traveling from vertex $i$ to vertex $j$ is directly proportional to the fuel spent in traversing the edge $(i,j)$ \emph{i.e.}, $c_{ij} = K\cdot f_{ij}$ ($c_{ij}$ and $c_{ji}$ may be different, but for the purpose of this paper, we assume $c_{ij} = c_{ji}$). It is also assumed that travel costs satisfy the triangle inequality \emph{i.e.}, for every $i,j,k\in V$, $c_{ij} + c_{jk} \geq c_{ik}$. Furthermore, let $F$ denote the fuel capacity of all the vehicles. The FCAVPP consists of finding a route for each vehicle such that each vehicle $v_i$ starts and ends its route at $d_0$, each target is visited at least once by some vehicle, the vehicles never run out of fuel while they traverse their respective routes, and the sum of the cost of all the edges present in the routes is a minimum. Fig. \ref{fig:illustration} shows a feasible solution to the FCAVPP for an instance with four vehicles and five refueling stations. In the next section, we will briefly present a few mathematical preliminaries regarding ``lifting for MILPs'' that will be used in Sec. \ref{subsec:nodebased} to develop tighter formulations.
\begin{figure}
\centering
\includegraphics[scale=0.8]{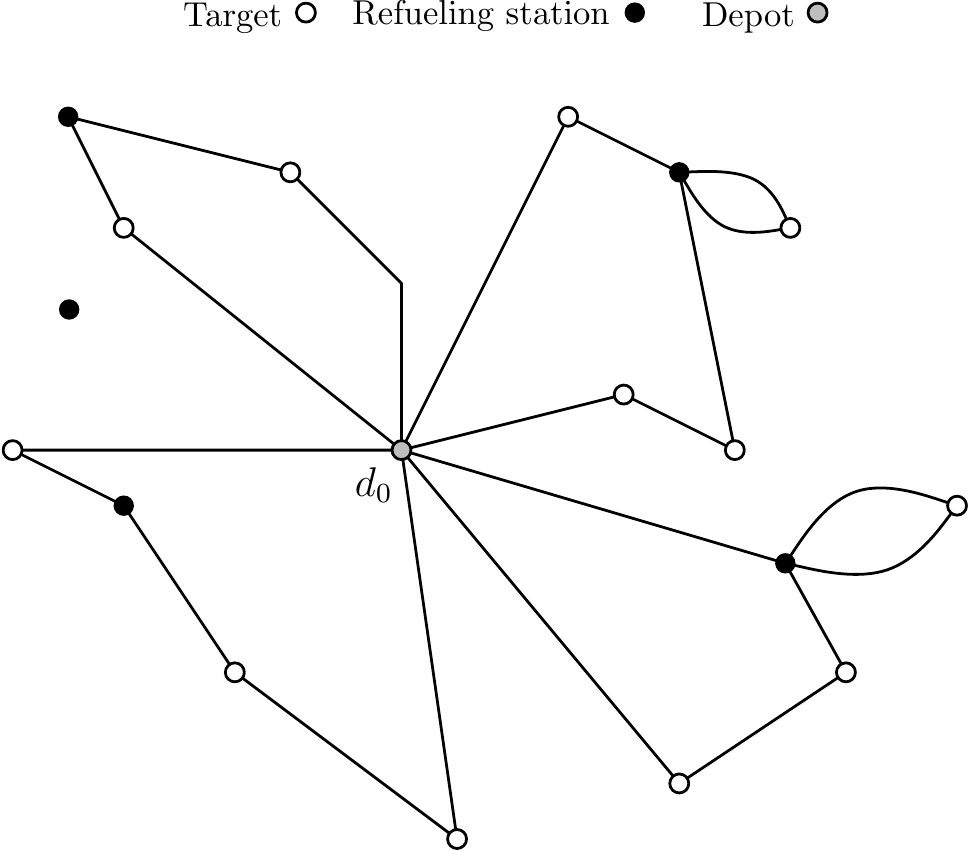}
\caption{A feasible solution to the FCAVPP. The solution is for a 4-vehicle problem. $d_0$ denotes the depot where all the vehicles are stationed initially. Note that a refueling station can be visited any number of times for refueling while some refueling stations may not be visited at all.}
\label{fig:illustration}
\end{figure}

\section{Lifting for MILPs}
\label{sec:lifting}
Before we present the idea of lifting for MILPs, we will review two useful definitions. A polyhedron is defined as the solution set of a finite number of linear equalities and inequalities, and a polytope is defined as a bounded polyhedron. A valid inequality for a polyhedron is an inequality that does not remove any feasible points of the polyhedron. Now, lifting is the process of constructing a valid inequality for a high dimensional polyhedron using a valid inequality for a low dimensional polyhedron. The idea of lifting was introduced by Gomory in 1969 \cite{Gomory1969}. The computational aspects of lifting were investigated by Padberg  \cite{Padberg1973}. Since then, lifting has been studied extensively \cite{Gu2000,Atamturk2004,Balas1978,Wolsey1976} and has been used for a variety of problems including vehicle routing, knapsack problem, supply chain, orienteering problems, to name a few. Lifting is usually applied sequentially; variables in a set are lifted one after another, and a separate optimization problem is solved to determine each lifting coefficient. The resulting inequality depends on the order in which the variables are lifted. 
\begin{proposition} 
A lifted inequality is guaranteed to have a dimension at least one greater than the original inequality, and the higher dimensional inequality is stronger than its corresponding lower dimensional counterpart.
\end{proposition}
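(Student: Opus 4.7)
The plan is to unpack the standard definition of sequential lifting and then establish each of the two assertions separately. Let $P \subseteq \mathbb{R}^n$ be the polyhedron obtained from the original feasible region by fixing a new variable to zero, and let $\pi^\top x \le \pi_0$ be a valid inequality for $P$ that defines a face $F = \{x \in P : \pi^\top x = \pi_0\}$ of dimension $d$. Introducing the new variable $x_{n+1} \ge 0$ yields the higher-dimensional polyhedron $P' \subseteq \mathbb{R}^{n+1}$, and lifting asks for the largest scalar $\alpha$ such that $\pi^\top x + \alpha x_{n+1} \le \pi_0$ remains valid for $P'$. In the typical $0/1$ setting this reduces to $\alpha = \pi_0 - \max\{\pi^\top x : (x,1) \in P'\}$, so by construction the lifted inequality agrees with the original on the slice $x_{n+1}=0$.

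For the dimension claim I would start from $d+1$ affinely independent points $x^1,\ldots,x^{d+1}$ of $F$ and embed them into $\mathbb{R}^{n+1}$ as $(x^i,0)$; these all lie on the face $F' = \{(x,x_{n+1}) \in P' : \pi^\top x + \alpha x_{n+1} = \pi_0\}$ and remain affinely independent in the higher-dimensional space. The core step is to exhibit one further point of $F'$ whose last coordinate is strictly positive. Maximality of $\alpha$ forces the existence of some $(\bar{x},\bar{x}_{n+1}) \in P'$ with $\bar{x}_{n+1} > 0$ attaining equality in the lifted inequality; otherwise one could increase $\alpha$ without losing validity, contradicting optimality. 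Since this extra point has nonzero last coordinate while the earlier ones do not, all $d+2$ points are affinely independent, so $\dim F' \ge d+1$.

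For the strength claim I would compare the lifted inequality with its trivial extension $\pi^\top x + 0\cdot x_{n+1} \le \pi_0$, which is valid for $P'$ by projection onto $\mathbb{R}^n$. Because the lifting procedure selects the \emph{largest} admissible $\alpha$, we have $\alpha \ge 0$, and the half-space $\{(x,x_{n+1}) : \pi^\top x + \alpha x_{n+1} \le \pi_0,\ x_{n+1}\ge 0\}$ is contained in $\{(x,x_{n+1}) : \pi^\top x \le \pi_0,\ x_{n+1}\ge 0\}$, with strict containment whenever $\alpha > 0$. This is precisely the polyhedral notion of one inequality being stronger than another: the lifted inequality dominates the trivial extension and, by the dimension bound just established, defines a strictly higher-dimensional face of $P'$.

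The main obstacle I anticipate is ensuring that the maximum defining $\alpha$ is actually attained, so that the extra tight point with $x_{n+1} > 0$ used in the dimension argument truly exists. This requires either that $P'$ is bounded in the direction $\pi$ on each slice $\{x_{n+1} = \text{const} > 0\}$ or that integrality of $P'$ in the MILP context supplies a combinatorial maximum. Both conditions are standard in the MILP settings used later in the paper, so the argument goes through.
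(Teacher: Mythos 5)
Your proposal is essentially a reconstruction of the proof that the paper itself does not give: the paper's ``proof'' is only a citation to Ch.~II.2, Props.~1.1 and 1.2 of \cite{Wolsey2014}. Your dimension argument (take $d+1$ affinely independent tight points of the restriction, embed them with last coordinate $0$, then use maximality of $\alpha$ to produce one additional tight point with positive last coordinate, giving $d+2$ affinely independent points on the lifted face) is exactly the argument behind Prop.~1.1 of that reference, and you correctly identify attainment of the lifting maximum as the crux; in the bounded $0$--$1$/MILP setting of this paper that maximum is attained (one also needs the slice $\{x_{n+1}=1\}$ of the feasible set to be nonempty, which is implicit in the standard statement and in your argument).

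The strength claim is where your proposal drifts. In the setup you yourself fixed --- $\pi^\top x \le \pi_0$ valid only for the restriction $P = P' \cap \{x_{n+1}=0\}$ --- the trivial extension $\pi^\top x + 0\cdot x_{n+1} \le \pi_0$ is \emph{not} automatically valid for $P'$ ``by projection,'' and the optimal lifting coefficient can be negative; that is precisely why lifting from a restriction is nontrivial. So the dominance argument ($\alpha \ge 0$, hence the lifted half-space is contained in the trivial one) does not establish the general statement under your own hypotheses. It does work in the way lifting is actually used in this paper (as in Prop.~\ref{prop:lifting}), where the inequality being lifted is already valid for the entire feasible set with a zero coefficient on the lifted variable, so $\alpha \ge 0$ and the lifted inequality dominates the unlifted one. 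In the general restriction setting, ``stronger'' should instead be read through the face-dimension/facet-preservation statement --- which your first argument already delivers and which is the content of Prop.~1.2 in the cited reference. In short: the dimension proof is correct and matches the cited textbook argument; the strength argument is valid in the paper's context but needs the explicit extra hypothesis that the unlifted inequality is globally valid (equivalently, that the zero-coefficient extension is valid) to be stated honestly.
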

\begin{proof}
See Ch. II.2, Prop. 1.1 and 1.2 in \cite{Wolsey2014}.
\end{proof}

Now, we illustrate the process of lifting using a numerical example. We will formulate an optimization problem to compute the lifting coefficient and observe that the lifted inequality is stronger than the original inequality. Consider the polytope $P$ defined by
\begin{flalign}
P := \{(x_1,x_2)\in \mathbb Z_+^2: x_1 + 3x_2 \leq 3, 3x_1 + x_2 \leq 3\} \label{eq:P}
\end{flalign}
where, $\mathbb Z_+^2$ is the set of non-negative integer points. $P$ contains the points $(0,0)$, $(1,0)$, and $(0,1)$. The inequality $x_1 \leq 1$ is a valid inequality for the polytope $P$ \emph{i.e.,} it does not remove any points in $P$. We will now lift this valid inequality to obtain a stronger valid inequality. The variable to be lifted is $x_2$, and we are interested in finding the best value of $\alpha$ (the lifting coefficient) such that the inequality $x_1 + \alpha x_2 \leq 1$ does not remove any feasible points in $P$. This can be formulated as the following optimization problem:
\begin{flalign}
\alpha = \max \{1- x_1 : (x_1,x_2) \in P, x_2 = 1\}.
\end{flalign}

Solving the above optimization problem, we obtain $\alpha = 1$, which is the best value for the lifting coefficient. The resulting inequality $x_1 + x_2 \leq 1$, a lifted version of $x_1 \leq 1$, is a better inequality for the polytope $P$. This simple procedure is used throughout the rest of the paper to strengthen various constraints for the FCAVPP and obtain tighter formulations.

\section{Mathematical formulations \label{sec:formulation}}
This section presents four formulations for the FCAVPP. The first two formulations are arc-based, and the remaining formulations are node-based. The arc-based and node-based formulations have additional decision variables for each edge and vertex, respectively, to impose the fuel constraints. For any given formulation $\mathcal F$, let $\mathcal F^L$ denote its linear programming relaxation obtained by allowing the integer variables to take continuous values within the lower and upper integer bounds, and $\operatorname{opt}(\mathcal F)$ denote the cost of its optimal solution. Additional notations that will be used in the formulation are as follows: for any set $S \subset V$, $\delta^+(S)=\{(i,j) \in E: i\in S, j\notin S\}$ and for any $A \subseteq E$, $x(A) = \sum_{(i,j)\in A} x_{ij}$.

\subsection{Arc-based formulations \label{subsec:arcbased}}
We first present an arc-based formulation $\mathcal F_1$ for the FCAVPP, inspired by the models for standard routing problems in \cite{Toth2001,Kara2011}. The decision variables used in the arc-based formulations are as follows: each edge $(i,j)\in E$ is associated with a variable $x_{ij}$, which equals $1$ if the edge $(i,j)$ is traversed by some vehicle, and $0$ otherwise. Also, associated with each edge $(i,j)$ is a flow variable $z_{ij}$ denoting the total fuel consumed by any vehicle as it starts from a depot and reaches the vertex $j$, when the predecessor of $j$ is $i$. Finally, associated with each refueling station $d \in D\setminus\{d_0\}$ is a binary variable $y_d$ that takes a value $1$ if it is used at least once by some vehicle and $0$ otherwise. Using the above variables, the first arc-based formulation $\mathcal F_1$ is given as follows:
\begin{flalign}
&(\mathcal F_1) \quad \text{Minimize} \quad \sum_{(i,j)\in E} c_{ij} x_{ij} \notag& \\
&\text{subject to:} \notag & \\
&\sum_{i\in V} x_{di} = \sum_{i\in V} x_{id} \quad \forall \, d\in D\setminus\{d_0\},\label{eq:f1-degree-d}& \\
&\sum_{i\in V} x_{di} \geq y_d \quad \forall \, d \in D\setminus\{d_0\}, \label{eq:f1-degree-d2} &\\
&\sum_{i\in V} x_{id_0} = m \text{ and } \sum_{i\in V} x_{d_0i} = m , \label{eq:f1-degree-t0} &\\
&\sum_{i\in V} x_{ij} = 1 \text{ and } \sum_{i\in V} x_{ji} = 1 \quad \forall \, j \in T, \label{eq:f1-degree-t} &\\
&x(\delta^+(S)) \geq y_d \quad \forall \, d\in S\cap D ,\, S\subset V\setminus\{d_0\} :  S\cap D \neq \emptyset,  \label{eq:f1-depotconnectivity} & \\
&\sum_{j\in V} z_{ij} - \sum_{j\in V} z_{ji} = \sum_{j\in V} f_{ij} x_{ij} \quad \forall \, i\in T, \label{eq:f1-fuel} &\\
&0 \leq z_{ij} \leq Fx_{ij} \quad \forall \, (i,j) \in E, \label{eq:f1-fuel2} &\\
&z_{di} = f_{di}x_{di} \quad \forall \, i\in T, \, d \in D \label{eq:f1-fuel3},   &\\
&x_{ij} \in \{0,1\} \quad \forall \, (i,j) \in E, \label{eq:f1-integer}  \text{ and}& \\
&y_d \in \{0,1\} \quad \forall \, d\in D\setminus \{d_0\}. \label{eq:f1-depoty}&
\end{flalign}
In the above formulation, the constraints \eqref{eq:f1-degree-d} and \eqref{eq:f1-degree-d2} state that the in-degree of each refueling station is equal to its out-degree. Constraints \eqref{eq:f1-degree-d2} also force $y_d$ for the refueling station $d \in D\setminus\{d_0\}$ to be $1$ if it is used by any vehicle. The constraints \eqref{eq:f1-degree-t0} ensure that all the vehicles leave and return to the depot. Constraints \eqref{eq:f1-degree-t} state that the in-degree and out-degree of each target must take a value $1$. Constraints \eqref{eq:f1-depotconnectivity} ensure that a feasible solution to  FCAVPP is connected. The constraints \eqref{eq:f1-fuel} eliminate sub-tours of the targets and also defines the flow variables $z_{ij}$ for each edge $(i,j) \in E$. The constraints \eqref{eq:f1-fuel2} and \eqref{eq:f1-fuel3} together impose $0\leq z_{ij} \leq F$ and they ensure that the fuel consumed by any vehicle to travel up to a depot does not exceed the fuel capacity of the vehicle, $F$. Finally, the constraints \eqref{eq:f1-integer} and \eqref{eq:f1-depoty} impose the binary restrictions on the decision variables $x_{ij}$ and $y_d$.

\begin{proposition} \label{prop:yint-relax}
The constraints \eqref{eq:f1-depoty} can be replaced by 
\begin{flalign}
& x_{di} \leq y_d \quad \forall\, i\in T\cup\{d_0\}, \, d\in D\setminus\{d_0\}, \text{ and } \label{eq:f1-fl} & \\
& 0\leq y_d \leq 1 \quad \forall \, d\in D\setminus \{d_0\}. \label{eq:f1-yrelax} & 
\end{flalign}
\end{proposition}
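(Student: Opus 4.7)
The plan is to show that the formulation $\widetilde{\mathcal F_1}$ obtained by replacing \eqref{eq:f1-depoty} with \eqref{eq:f1-fl} and \eqref{eq:f1-yrelax} has the same integer optimal value as $\mathcal F_1$. Since the objective involves only the $x$-variables, it suffices to verify that the two integer feasible sets project onto identical sets of $x$-vectors.

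For one direction, I would fix any integer point $(x,y)$ satisfying \eqref{eq:f1-degree-d2}, \eqref{eq:f1-fl}, \eqref{eq:f1-yrelax}, and \eqref{eq:f1-integer}. The new constraint \eqref{eq:f1-fl} gives $y_d \ge \max_{i\in T\cup\{d_0\}} x_{di}$, while the retained constraint \eqref{eq:f1-degree-d2} gives $y_d \le \sum_{i\in V} x_{di}$. With $x$ binary, either some $x_{di}=1$, in which case $y_d \ge 1$ combined with $y_d \le 1$ from \eqref{eq:f1-yrelax} forces $y_d = 1$, or every $x_{di}=0$, in which case $y_d \le 0$ combined with $y_d \ge 0$ forces $y_d = 0$. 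Hence $y_d$ is automatically binary, so $(x,y)$ satisfies \eqref{eq:f1-depoty} and is feasible in $\mathcal F_1$.

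For the reverse inclusion, I would take an integer optimum $(x^*, y^*)$ of $\mathcal F_1$ and, using the fact that $y$ does not appear in the objective, relabel $\bar y_d := \max_{i\in T\cup\{d_0\}} x^*_{di}$. Constraints \eqref{eq:f1-fl}, \eqref{eq:f1-yrelax}, and \eqref{eq:f1-degree-d2} then hold by construction (the last since $\max_i x^*_{di} \le \sum_i x^*_{di}$). The only nontrivial check is \eqref{eq:f1-depotconnectivity}: whenever $\bar y_d = 1$ we need $x^*(\delta^+(S)) \ge 1$ for every $S \ni d$ with $d_0 \notin S$, a property that \eqref{eq:f1-depotconnectivity} under the original $y^*$ does not automatically enforce (since $y^*_d$ might have been $0$).

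The main obstacle is thus ruling out the possibility that an optimal $x^*$ routes a sub-tour through some visited auxiliary station $d$ in a component disjoint from $d_0$. I would argue this using optimality together with the triangle inequality: the targets on such an orphan sub-tour can be spliced into a depot-rooted vehicle route via shortcuts without increasing the total cost, and the now-useless edges among the remaining refueling-station vertices can be removed, yielding a strictly cheaper or equally cheap feasible solution and contradicting the optimality of $x^*$. Once this connectivity is established, $(x^*, \bar y)$ is feasible in $\widetilde{\mathcal F_1}$ with the same objective, closing the equivalence; the remaining verifications are mechanical.
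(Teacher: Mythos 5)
Your first direction is sound and is essentially the paper's argument: with $x$ binary, \eqref{eq:f1-fl} together with \eqref{eq:f1-degree-d2} pins $y_d$ to $1$ when station $d$ is used and to $0$ otherwise, so the integrality requirement \eqref{eq:f1-depoty} becomes redundant. (You are in fact slightly more careful than the paper, which does not explicitly invoke \eqref{eq:f1-degree-d2}.)

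The second direction, however, contains a genuine gap, and it stems from aiming at a stronger claim than the proposition requires. The paper's proof only argues that \eqref{eq:f1-fl} is a \emph{valid inequality for the FCAVPP}: any genuine routing solution, with $y_d$ set to $1$ exactly when station $d$ is used, satisfies $x_{di}\leq y_d$, so replacing \eqref{eq:f1-depoty} by \eqref{eq:f1-fl}--\eqref{eq:f1-yrelax} still correctly models the problem. You instead try to prove that the two MILPs have identical optimal values by mapping an optimal $(x^*,y^*)$ of $\mathcal F_1$ to $\bar y_d=\max_i x^*_{di}$, and your repair of the resulting connectivity violations does not work. The key step --- ``the targets on such an orphan sub-tour can be spliced into a depot-rooted vehicle route via shortcuts without increasing the total cost'' --- is false in general: merging a vertex-disjoint cycle into a depot-rooted route requires introducing linking edges between the two components, whose cost (twice the separation, roughly) is not compensated by any shortcutting, so the connected solution can be strictly more expensive; this is precisely why the connectivity constraints \eqref{eq:f1-depotconnectivity} are needed at all. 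Moreover, even when the cost works out, the splice changes the fuel consumed between consecutive refuelings on the host route, so fuel feasibility is not preserved and you do not address it. Indeed, because in $\mathcal F_1$ a station $d$ with $y^*_d=0$ can sit on a cheap cycle disconnected from $d_0$ (the flow constraints \eqref{eq:f1-fuel} reset at stations and \eqref{eq:f1-depotconnectivity} is switched off), the literal MILP equivalence you set out to prove can fail --- adding \eqref{eq:f1-fl} may strictly increase $\operatorname{opt}(\mathcal F_1)$, and that is exactly the intended effect. The correct target is the paper's weaker statement: validity of \eqref{eq:f1-fl} with respect to feasible FCAVPP solutions plus the automatic binariness of $y$, both of which you already have in your first paragraph.
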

\begin{proof}
We first note that the constraints \eqref{eq:f1-fl} are valid inequalities for the FCAVPP. They state that the vehicles can use a refueling station $d$ only if $y_d = 1$. This constraint also ensures that the value of $y_d$ will either be $0$ or $1$ as $x_{di} \in \{0,1\}$ for any $i\in T \cup\{d_0\}$. Hence, the binary restrictions on the variables $y_d$ are relaxed in \eqref{eq:f1-yrelax}.
\end{proof}

Next, we present another arc-based formulation $\mathcal F_2$ which is a strengthened version of $\mathcal F_1$. The following proposition is a modified version of the Prop. 1 presented in \cite{Kara2011} for the distance-constrained vehicle routing problem; it strengthens the bounds given by the constraints \eqref{eq:f1-fuel2}.
\begin{proposition} \label{prop:strengthen} The inequalities in \eqref{eq:f1-fuel2} can be strengthened as follows:
\begin{flalign}
& z_{ij} \leq (F - t_j)x_{ij} \quad \forall j\in T,\, (i,j) \in E, \label{eq:f2-fuel1} & \\
& z_{id} \leq Fx_{id} \quad \forall i \in V \text{ and } d \in D, \label{eq:f2-fuel2} & \\
& z_{ij} \geq (s_i + f_{ij}) x_{ij} \quad \forall i\in T, \, (i,j) \in E \label{eq:f2-fuel3}, &
\end{flalign}
where, $t_i = \min_{d\in D} f_{id}$ and $s_i = \min_{d\in D} f_{di}$.
\end{proposition}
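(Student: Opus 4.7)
The plan is to verify that each of the three strengthened inequalities is valid for every integer feasible solution of $\mathcal F_1$, after which the lifting framework of Section \ref{sec:lifting} immediately implies that the new system dominates the original bounds $0 \leq z_{ij} \leq F x_{ij}$. The key interpretive fact I would use is that the flow variable $z_{ij}$ records the cumulative fuel burned since the last refueling event, measured at the moment the vehicle enters $j$ via the arc $(i,j)$; this interpretation is exactly what is enforced by constraints \eqref{eq:f1-fuel}--\eqref{eq:f1-fuel3}. Moreover, since $c = K f$ and $c$ satisfies the triangle inequality, so does $f$, which means that the fuel consumed along any subpath from $u$ to $v$ is at least $f_{uv}$.

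For \eqref{eq:f2-fuel1}, I would fix any feasible solution with $x_{ij} = 1$ and $j \in T$, and then walk forward along the route of the vehicle that uses arc $(i,j)$ until the next refueling station or depot $d^\star$ is reached; such a $d^\star$ always exists because every vehicle must eventually return to $d_0$. Since no refueling happens between $j$ and $d^\star$, the cumulative fuel consumed from the previous refueling event up to $d^\star$ cannot exceed $F$. By the triangle inequality on $f$, this cumulative amount is at least $z_{ij} + f_{j d^\star} \geq z_{ij} + t_j$, which yields $z_{ij} \leq F - t_j$.

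For \eqref{eq:f2-fuel3}, I would run the symmetric backward walk: with $x_{ij} = 1$ and $i \in T$, trace the vehicle's route backward from $i$ to the previous refueling station (or starting depot) $d_\star$. By the triangle inequality, the fuel burned from $d_\star$ to $i$ is at least $f_{d_\star i} \geq s_i$, and appending the arc $(i,j)$ shows that the fuel burned from $d_\star$ to $j$, which equals $z_{ij}$ in view of \eqref{eq:f1-fuel}, is at least $s_i + f_{ij}$. For \eqref{eq:f2-fuel2}, I would simply observe that when the endpoint is itself a refueling station no stronger bound than $F$ is achievable, since a vehicle may in principle arrive at $d$ with any consumption in $[0,F]$; thus that inequality is retained verbatim.

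The main obstacle is notational rather than conceptual: I would need to formalize "the previous" and "the next" refueling site on a route that is only implicitly encoded by the $x$-variables, and then argue that the $z$-values forced by \eqref{eq:f1-fuel}--\eqref{eq:f1-fuel3} coincide with the cumulative-fuel interpretation used in the walks above. Once this bookkeeping is pinned down, the three bounds fall out from the triangle inequality on $f$, and a short check confirms that the new bounds strictly dominate \eqref{eq:f1-fuel2} whenever $t_j$ or $s_i$ is positive, so the strengthening is genuine and may be read as a sequential lifting of the original arc-capacity constraints in the sense of Section \ref{sec:lifting}.
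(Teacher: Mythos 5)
Your proof is correct and follows essentially the same route as the paper's: interpret $z_{ij}$ as the cumulative fuel consumed since the last visit to a vertex of $D$, then argue that a vehicle entering $j\in T$ must still reach some refueling station or the depot (costing at least $t_j$, giving \eqref{eq:f2-fuel1}) and that a vehicle leaving $i\in T$ has already consumed at least $s_i + f_{ij}$ upon arriving at $j$ (giving \eqref{eq:f2-fuel3}), with \eqref{eq:f2-fuel2} left as the original bound for arcs into $D$. The only difference is expository: you make explicit the forward/backward walk along the route and the triangle-inequality step that the paper leaves implicit in its phrase ``minimum amount of fuel required to reach a refueling station,'' which is a fair tightening of the same argument rather than a new one.
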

\proof
When $j \in D$ or $i \in D$, the constraints \eqref{eq:f2-fuel2} and \eqref{eq:f1-fuel3} specify the values of $z_{id}$ and $z_{di}$, respectively. When both $i,j \in D$, the constraint \eqref{eq:f1-fuel2} bounds the value of $z_{ij}$. Hence, we need only to discuss the case when $i,j \in T$. When $x_{ij} = 1$, the total fuel consumed by any vehicle that traverses the edge $(i,j)$ cannot be greater that $(F - t_j)$, where $t_j$ is the minimum amount of fuel required by any vehicle to reach a refueling station or the depot from target $j$. Therefore, the constraint in \eqref{eq:f2-fuel1} strengthens the upper bound of $z_{ij}$ in \eqref{eq:f1-fuel2}. Similarly, any vehicle that traverses the edge $(i,j)$ consumes at least $(s_i + f_{ij})$ amount of fuel. As a result, the constraint in \eqref{eq:f2-fuel3} strengthens the lower bound of $z_{ij}$ in \eqref{eq:f1-fuel2}.
\endproof
Hence, the second arc-based formulation is as follows:
\begin{flalign}
&(\mathcal F_2) \quad \text{Minimize} \quad \sum_{(i,j)\in E} c_{ij} x_{ij} \notag& \\
&\text{subject to: \eqref{eq:f1-degree-d} -- \eqref{eq:f1-fuel}, \eqref{eq:f1-fuel3}, \eqref{eq:f1-integer}, and \eqref{eq:f1-fl} -- \eqref{eq:f2-fuel3}.} \notag &
\end{flalign}
\begin{corollary} \label{cor:LP} $\operatorname{opt}(\mathcal F_2^L) \geq \operatorname{opt}(\mathcal F_1^L)$. \hfill \qed
\end{corollary}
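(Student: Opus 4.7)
The plan is to establish set-containment of the LP relaxations' feasible regions: since both $\mathcal F_1^L$ and $\mathcal F_2^L$ minimize the same linear objective $\sum_{(i,j)\in E} c_{ij} x_{ij}$, showing that every feasible point of $\mathcal F_2^L$ is also feasible for $\mathcal F_1^L$ will immediately give $\operatorname{opt}(\mathcal F_2^L) \ge \operatorname{opt}(\mathcal F_1^L)$.

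First I would catalogue the differences between the two LP relaxations. Both include \eqref{eq:f1-degree-d}--\eqref{eq:f1-fuel} and \eqref{eq:f1-fuel3}, and in both the binary restrictions \eqref{eq:f1-integer} and \eqref{eq:f1-depoty} are relaxed to box constraints $0\le x_{ij}\le 1$ and $0\le y_d\le 1$. The only structural gaps are (i) $\mathcal F_1^L$ carries the fuel bounds \eqref{eq:f1-fuel2}, while $\mathcal F_2^L$ replaces them with \eqref{eq:f2-fuel1}--\eqref{eq:f2-fuel3}, and (ii) $\mathcal F_2^L$ additionally enforces the linking inequality \eqref{eq:f1-fl}. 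Since (ii) only cuts away points of $\mathcal F_1^L$, the entire burden reduces to verifying that the triple \eqref{eq:f2-fuel1}--\eqref{eq:f2-fuel3} implies \eqref{eq:f1-fuel2} pointwise.

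To carry out that verification, I would take any $(x,y,z)$ satisfying the $\mathcal F_2^L$ constraints and check \eqref{eq:f1-fuel2} edge-by-edge. For the upper bound $z_{ij}\le F x_{ij}$: if $j\in T$ then \eqref{eq:f2-fuel1} yields $z_{ij}\le (F-t_j)x_{ij}\le F x_{ij}$ because $t_j = \min_{d\in D} f_{jd}\ge 0$; if $j\in D$ then \eqref{eq:f2-fuel2} directly gives $z_{ij}\le F x_{ij}$. For the lower bound $z_{ij}\ge 0$: if $i\in T$ then \eqref{eq:f2-fuel3} gives $z_{ij}\ge (s_i+f_{ij})x_{ij}\ge 0$ because $s_i, f_{ij}, x_{ij}\ge 0$; and if $i\in D$ with $j\in T$ then \eqref{eq:f1-fuel3} gives $z_{ij}=f_{ij}x_{ij}\ge 0$. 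Thus every feasible point of $\mathcal F_2^L$ satisfies \eqref{eq:f1-fuel2}, and all remaining $\mathcal F_1^L$ constraints are inherited verbatim.

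The argument is essentially bookkeeping rather than a deep calculation; the main thing to be careful about is the case analysis on whether the endpoints of $(i,j)$ lie in $T$ or $D$, and to invoke the non-negativity of $t_j$, $s_i$, and $f_{ij}$ at the right spots. Once feasibility is established, the conclusion $\operatorname{opt}(\mathcal F_2^L)\ge \operatorname{opt}(\mathcal F_1^L)$ follows from the standard observation that restricting the feasible set of a minimization problem cannot decrease its optimal value.
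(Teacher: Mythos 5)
Your overall strategy is exactly the intended one: the paper gives no explicit proof of Corollary~\ref{cor:LP} because it is meant to follow immediately from Proposition~\ref{prop:strengthen} by the containment argument you spell out, and your case analysis for the bounds implied by \eqref{eq:f2-fuel1}--\eqref{eq:f2-fuel3} and \eqref{eq:f1-fuel3} matches the reasoning in that proposition's proof. One case is missing, however: your lower-bound check covers $i\in T$ and the pair $i\in D,\ j\in T$, but not edges with \emph{both} endpoints in $D$. As $\mathcal F_2$ is literally listed (it drops \eqref{eq:f1-fuel2} entirely and \eqref{eq:f2-fuel3}, \eqref{eq:f1-fuel3} apply only when the tail is in $T$ or the head is in $T$, respectively), nothing forces $z_{dd'}\geq 0$ for $d,d'\in D$, so strict set containment of the feasible regions can fail --- a point with $z_{dd'}<0$ is $\mathcal F_2^L$-feasible but violates \eqref{eq:f1-fuel2}. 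The conclusion survives with a one-line patch: those $z_{dd'}$ appear neither in the objective nor in the flow-balance constraints \eqref{eq:f1-fuel} (which are written only for $i\in T$), so given any $\mathcal F_2^L$-feasible point you may reset $z_{dd'}:=0$ on all depot--station and station--station edges to obtain an $\mathcal F_1^L$-feasible point with the same cost, which still yields $\operatorname{opt}(\mathcal F_2^L)\geq\operatorname{opt}(\mathcal F_1^L)$; alternatively, note that the paper's proof of Proposition~\ref{prop:strengthen} treats \eqref{eq:f1-fuel2} as retained for edges with $i,j\in D$, under which reading your containment argument goes through verbatim.
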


\subsection{Node-based formulations \label{subsec:nodebased}}
In this section, we present a node-based formulation for the FCAVPP based on the models for the distance-constrained VRP in \cite{Desrochers1991,Kara2010}. For the node-based formulation, apart from the binary variable $x_{ij}$ for each edge $(i,j) \in E$, we have an auxiliary variable $u_i$ for each target $i$ to eliminate sub-tours of the targets disconnected from the vertices in set the $D$ and to enforce the fuel constraints. In addition, we will also use the following two parameters: $t_i = \min_{d\in D} f_{id}$ and $s_i = \min_{d\in D} f_{di}$ for every vertex $i \in V$. For any $d \in D$, $t_d = 0$ and $s_d = 0$. Using the above notations, the formulation $\mathcal F_3$ is given as follows:
\begin{flalign}
&(\mathcal F_3) \quad \text{Minimize} \quad \sum_{(i,j)\in E} c_{ij} x_{ij} \notag& \\
&\text{subject to: \eqref{eq:f1-degree-d} -- \eqref{eq:f1-depotconnectivity}, \eqref{eq:f1-integer}, \eqref{eq:f1-fl} -- \eqref{eq:f1-yrelax}}, \notag & \\
& u_i - u_j + M_{ij} x_{ij} \leq M_{ij} - f_{ij} \quad \forall i, j \in T, \label{eq:f3-fuel1}& \\
& u_i \geq s_i + \sum_{d\in D} (f_{di} - s_i)x_{di} \quad \forall i \in T \label{eq:f3-fuel2},  \text{ and} & \\
& u_i \leq F - t_i - \sum_{d \in D} (f_{id}-t_i)x_{id} \quad \forall i \in T. \label{eq:f3-fuel3} &
\end{flalign}
The constraints \eqref{eq:f3-fuel1} serve two purposes: they eliminate sub-tours of targets that are disconnected from any vertex in the set $D$ and any path connecting two vertices in the set $D$ that consume greater than $F$ amount of fuel (see Prop. \ref{prop:validity}). These constraints are a modified version of the MTZ constraints. The value of $M_{ij}$ in \eqref{eq:f3-fuel1} is given by $M_{ij} = F - s_j -t_i + f_{ij}$. The constraints \eqref{eq:f3-fuel2} and \eqref{eq:f3-fuel3} specify the upper and lower bounds on $u_i$, for every target $i$. The next proposition proves the validity of the constraint \eqref{eq:f3-fuel1} for the FCAVPP; it is similar to a proposition presented for the distance-constrained VRP \cite{Desrochers1991}.

\begin{proposition} \label{prop:validity}
The constraint \eqref{eq:f3-fuel1} eliminates sub-tours of the targets disconnected from any vertex in the $D$ and any path connecting two vertices in $D$ that requires more than $F$ amount of fuel for any vehicle.
\end{proposition}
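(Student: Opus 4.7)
The plan is to prove both claims by exploiting the fact that whenever $x_{ij}=1$ the constraint \eqref{eq:f3-fuel1} reduces to the clean inequality $u_j \geq u_i + f_{ij}$, so the $u$-values behave like accumulated fuel along any path or cycle of targets, while the bounds \eqref{eq:f3-fuel2}--\eqref{eq:f3-fuel3} pin down the correct values at the endpoints where a depot/refueling vertex in $D$ is adjacent to a target.

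First I would dispose of the sub-tour case. Suppose, for contradiction, that some feasible $x$ contains a directed cycle $i_1 \to i_2 \to \cdots \to i_p \to i_1$ whose vertices $i_1,\dots,i_p$ all lie in $T$ and are disconnected from $D$. Substituting $x_{i_k i_{k+1}} = 1$ into \eqref{eq:f3-fuel1} yields $u_{i_{k+1}} \geq u_{i_k} + f_{i_k i_{k+1}}$ for each $k$ (indices modulo $p$). Summing these $p$ inequalities around the cycle gives $0 \geq \sum_{k=1}^{p} f_{i_k i_{k+1}}$, which is impossible since every $f_{ij}$ is strictly positive. Hence no such cycle can exist.

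Next I would handle the fuel-length claim. Consider any directed path $d \to i_1 \to i_2 \to \cdots \to i_p \to d'$ with $d,d' \in D$ and $i_1,\dots,i_p \in T$ selected by the $x$-variables. Since $x_{d i_1}=1$, constraint \eqref{eq:f3-fuel2} forces $u_{i_1} \geq s_{i_1} + (f_{d i_1}-s_{i_1}) = f_{d i_1}$, and since $x_{i_p d'}=1$, constraint \eqref{eq:f3-fuel3} forces $u_{i_p} \leq F - t_{i_p} - (f_{i_p d'}-t_{i_p}) = F - f_{i_p d'}$. Applying the same reduction of \eqref{eq:f3-fuel1} as in the previous paragraph to each internal edge gives $u_{i_{k+1}} \geq u_{i_k} + f_{i_k i_{k+1}}$ for $k=1,\dots,p-1$. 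Telescoping these inequalities together with the two endpoint bounds yields
\begin{equation*}
f_{d i_1} + \sum_{k=1}^{p-1} f_{i_k i_{k+1}} + f_{i_p d'} \;\leq\; u_{i_p} + f_{i_p d'} \;\leq\; F,
\end{equation*}
so the total fuel spent along the $d$-to-$d'$ path cannot exceed $F$, contradicting the hypothesis.

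The only subtlety, and the step I would be most careful about, is checking that the specific choice $M_{ij}=F-s_j-t_i+f_{ij}$ is simultaneously small enough to be a tight ``big-$M$'' (so that \eqref{eq:f3-fuel1} actually bites when $x_{ij}=1$) and large enough that the constraint is slack whenever $x_{ij}=0$. The latter follows by combining \eqref{eq:f3-fuel2}--\eqref{eq:f3-fuel3}: any feasible $u$ satisfies $u_i \leq F - t_i$ and $u_j \geq s_j$, so $u_i - u_j \leq F - t_i - s_j = M_{ij}-f_{ij}$, confirming that \eqref{eq:f3-fuel1} imposes no spurious restriction on the $u$-variables when the corresponding edge is not chosen. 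Once this consistency is verified, the two paragraphs above give the full proof.
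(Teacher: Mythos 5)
Your proof is correct and follows essentially the same route as the paper: reduce \eqref{eq:f3-fuel1} with $x_{ij}=1$ to $u_j \geq u_i + f_{ij}$, telescope around a target-only cycle to get a contradiction, and for a depot-to-depot path combine the telescoped inequalities with the endpoint bounds $u_{i_1}\geq f_{d i_1}$ from \eqref{eq:f3-fuel2} and $u_{i_p}\leq F - f_{i_p d'}$ from \eqref{eq:f3-fuel3} to conclude the fuel used is at most $F$. Your additional verification that $M_{ij}=F-s_j-t_i+f_{ij}$ keeps the constraint slack when $x_{ij}=0$ goes beyond what the paper proves here, and is a sound (and worthwhile) consistency check.
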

\proof 
Suppose $(t_1, t_2, \dots, t_k, t_{k+1} \equiv t_1)$ is a sub-tour of the targets disconnected from any vertex in the set $D$. Aggregating the constraints \eqref{eq:f3-fuel1} corresponding to each edge in the sub-tour, we obtain 
\begin{flalign*}
&\sum_{i=1}^k (u_{t_i} - u_{t_{i+1}} + f_{t_i t_{i+1}}) \leq 0 \Rightarrow \sum_{i=1}^k f_{t_i t_{i+1}} \leq 0, &
\end{flalign*}
a contradiction. Hence, the constraint \eqref{eq:f3-fuel1} eliminates sub-tours of the targets disconnected from any vertex in the set $D$. Now, let $(v_1, v_2, \dots, v_{k-1}, v_k)$ denote a path such that $v_1, v_k \in D$ and $v_i \in T$ for $i\in \{2,\dots,k-1\}$. Again, aggregating the constraints \eqref{eq:f3-fuel1} for the edges $(v_j,v_{j+1})$, $j=2,\dots, k-2$, we obtain 
\begin{flalign*}
&u_{v_2} - u_{v_{k-1}} + \sum_{j=2}^{k-2} f_{v_j v_{j+1}} \leq 0. & 
\end{flalign*}
But, by constraints \eqref{eq:f3-fuel2} and \eqref{eq:f3-fuel3}, we have $u_{v_2} \geq f_{v_1 v_2}$ and  $u_{v_{k-1}} \leq F - f_{v_{k-1} v_k}$, respectively. Combining all the three inequalities, we obtain 
\begin{flalign*}
& \sum_{i=1}^{k-1} f_{v_i, v_{i+1}} \leq F. &
\end{flalign*}
Hence, any vehicle that traverses a path starting and terminating at vertices in $D$, consumes less than $F$ amount of fuel.
\endproof

The following proposition strengthens the constraints \eqref{eq:f3-fuel1}--\eqref{eq:f3-fuel3}. It uses the idea of lifting (see Sec. \ref{sec:lifting}) to tighten these constraints.

\begin{proposition} \label{prop:lifting} The inequalities in \eqref{eq:f3-fuel1}, \eqref{eq:f3-fuel2}, and \eqref{eq:f3-fuel3} can be strengthened as follows:
\begin{flalign}
& u_i - u_j + M_{ij} x_{ij} + (M_{ij}-f_{ij}-f_{ji})x_{ji}  \leq  M_{ij} - f_{ij} \quad  \forall i,j \in T, \label{eq:f3-fuel4} \\
& u_i \geq \sum_{j \in V} (s_j + f_{ji}) x_{ji} \quad \forall i \in T, \text{ and} \label{eq:f3-fuel5}  & \\
& u_i \leq F - \sum_{j \in V} (t_j + f_{ij}) x_{ij} \quad \forall i \in T \label{eq:f3-fuel6} & 
\end{flalign}
where, $x_{ii} = 0$ and $x_{ij} = 0$ whenever $s_i + f_{ij} + t_j > F$.
\end{proposition}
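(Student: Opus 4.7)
The plan is to apply the sequential lifting procedure of Section~\ref{sec:lifting} to each of the three original inequalities \eqref{eq:f3-fuel1}, \eqref{eq:f3-fuel2}, and \eqref{eq:f3-fuel3} in turn, and to verify that the resulting coefficients match those in \eqref{eq:f3-fuel4}--\eqref{eq:f3-fuel6}. For each lifted inequality the verification reduces to (i) identifying the variables being lifted in, (ii) characterizing the set of integer feasible points of $\mathcal F_3$ with the lifted variables fixed to $1$, and (iii) computing the extremum of the left-hand side of the original inequality over this restricted set.

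For \eqref{eq:f3-fuel4}, I would lift the variable $x_{ji}$ into the already-valid \eqref{eq:f3-fuel1}. The key observations are: (a) the degree constraints \eqref{eq:f1-degree-t} together with the MTZ constraints \eqref{eq:f3-fuel1} applied in both the $(i,j)$ and $(j,i)$ directions make the two-cycle $x_{ij}=x_{ji}=1$ infeasible, since adding the two inequalities yields $0\le -f_{ij}-f_{ji}$, contradicting the positivity of the fuel costs; thus $x_{ij}=0$ whenever $x_{ji}=1$; (b) on any feasible route, when $x_{ji}=1$ the vertex $j$ is the immediate predecessor of $i$, so setting $u$ to its canonical value (cumulative fuel consumed since the last visit to a refueling station) gives $u_i - u_j = f_{ji}$. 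Substituting into the lifting formula, the maximum of $u_i - u_j + M_{ij}x_{ij}$ at integer feasible points with $x_{ji}=1$ equals $f_{ji}$, so the lifting coefficient is $(M_{ij}-f_{ij})-f_{ji}$, matching \eqref{eq:f3-fuel4}.

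For \eqref{eq:f3-fuel5} and \eqref{eq:f3-fuel6} the strategy is analogous: \eqref{eq:f3-fuel2} and \eqref{eq:f3-fuel3} only account for the possibility of the predecessor (resp.\ successor) of $i$ being a refueling station, and they can be lifted by introducing a coefficient on each variable $x_{ji}$ (resp.\ $x_{ij}$) with $j\in T$. When $x_{ji}=1$ with $j\in T$, combining \eqref{eq:f3-fuel2} for vertex $j$ (giving $u_j \ge s_j$) with \eqref{eq:f3-fuel1} applied to $(j,i)$ (giving $u_i \ge u_j + f_{ji}$) yields $u_i \ge s_j + f_{ji}$, so the lifting coefficient for $x_{ji}$ is $s_j+f_{ji}$. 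The triangle inequality $s_i \le s_j + f_{ji}$, obtained from $f_{di} \le f_{dj}+f_{ji}$ and minimizing over $d\in D$, guarantees that the new coefficient dominates the old coefficient $s_i$ implicit in \eqref{eq:f3-fuel2}, so \eqref{eq:f3-fuel5} implies \eqref{eq:f3-fuel2}. The argument for \eqref{eq:f3-fuel6} is symmetric, using $t_i \le f_{ij}+t_j$.

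The main obstacle will be carefully justifying observation (b) above for \eqref{eq:f3-fuel4}. Because the $u$ variables are continuous auxiliaries and the MTZ constraints only impose a lower bound $u_i - u_j \ge f_{ji}$ when $x_{ji}=1$, to show that the lifted coefficient does not cut off any feasible tour I need to argue that at any feasible integer solution of the FCAVPP one may always set $u_i = u_j + f_{ji}$ without violating any other constraint of $\mathcal F_3$; this follows from interpreting $u_i$ as cumulative fuel consumed since the last visit to a refueling station, which is precisely the interpretation under which \eqref{eq:f3-fuel1}--\eqref{eq:f3-fuel3} are originally shown valid in Proposition~\ref{prop:validity}. A secondary subtlety is justifying the preconditions $x_{ii}=0$ and $x_{ij}=0$ whenever $s_i+f_{ij}+t_j>F$: the first is immediate from the absence of self-loops, while the second follows from fuel feasibility, since any route using edge $(i,j)$ must spend at least $s_i+f_{ij}+t_j$ fuel between two consecutive refueling events, which cannot exceed~$F$.
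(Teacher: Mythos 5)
Your proposal is correct and takes essentially the same route as the paper: sequentially lifting the $x_{ji}$ (resp.\ $x_{ij}$) variables into \eqref{eq:f3-fuel1}--\eqref{eq:f3-fuel3}, using that $x_{ji}=1$ forces $x_{ij}=0$ and $u_i \ge u_j + f_{ji}$, together with the bounds $u_j \ge s_j$ and $u_j \le F - t_j$, to obtain the coefficients $M_{ij}-f_{ij}-f_{ji}$, $s_j+f_{ji}$, and $t_j+f_{ij}$. Your explicit discussion of the canonical-$u$ assignment (validity in the sense that no feasible route is cut off) and of the preconditions $x_{ii}=0$ and $x_{ij}=0$ when $s_i+f_{ij}+t_j>F$ merely makes precise points the paper leaves implicit.
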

\proof
The constraints \eqref{eq:f3-fuel4} are obtained by lifting each variable $x_{ji},~j\in T$ in \eqref{eq:f3-fuel1}, in any order. Let $j$ be an arbitrary target in $T$. We compute its lifting coefficient as the maximum value of $\alpha$ such that the following constraint is valid to the FCAVPP: 
\begin{flalign}
&u_i - u_j + M_{ij} x_{ij} + \alpha x_{ji} \leq M_{ij} - f_{ij}. \label{eq:via-1}&
\end{flalign}
The above inequality is valid when $x_{ji} = 0$, as it reduces to \eqref{eq:f3-fuel1}. So, we suppose $x_{ji} = 1$; then, we have $x_{ij} = 0$ and $u_j + f_{ji} \leq u_i$. The constraint \eqref{eq:via-1} simplifies to
\begin{flalign*}
&\alpha \leq M_{ij} - f_{ij} + u_j - u_i \Rightarrow \alpha \leq M_{ij} - f_{ij} - f_{ji}. &
\end{flalign*}
Hence, the lifting coefficient for the variable $x_{ji}$ is given by $M_{ij} - f_{ij} - f_{ji}$.

Similarly, \eqref{eq:f3-fuel5} can be obtained by lifting each $x_{ji}$ variable for $j\in T$ in constraint \eqref{eq:f3-fuel2}, in any order. To that end, let $j\in T$. We compute the maximum value that $\alpha$ can take so that the following constraint is valid to the FCAVPP: 
\begin{flalign}
& u_i \geq s_i + \sum_{d\in D} (f_{di} - s_i)x_{di} + \alpha x_{ji}. \label{eq:via-2} &
\end{flalign}
The above inequality is valid for $x_{ji} = 0$. When $x_{ji} = 1$, we have $x_{di} = 0$ and $u_j + f_{ji} \leq u_i$. Hence, we have 
\begin{flalign*}
&\alpha \leq u_i - s_i \leq u_j + f_{ji} - s_i.&
\end{flalign*}
The minimum value of $u_j$ is equal to $s_j$. Hence, the lifting coefficient for $x_{ji}$ is equal to  $s_j + f_{ji} - s_i$. Similarly, the coefficients of the other $x_{ji}$ variables can be computed. Using these extra terms, the constraint \eqref{eq:via-2} can be simplified as follows:
\begin{flalign*}
&~u_i \geq s_i + \sum_{j \in V} (s_j + f_{ji}-s_i) x_{ji} \quad \forall i \in T & \\
\Rightarrow &~u_i \geq s_i + \sum_{j\in V} (s_j + f_{ji})x_{ji} - s_i \cdot \sum_{j\in V}x_{ji} \quad \forall i \in T& \\
\Rightarrow &~u_i \geq \sum_{j\in V} (s_j + f_{ji})x_{ji} \quad \forall i \in T.&
\end{flalign*}
The last implication follows from the constraints \eqref{eq:f1-degree-t}. 

Finally, the constraints \eqref{eq:f3-fuel6} are obtained from \eqref{eq:f3-fuel3} by lifting each $x_{ij}$ variable for every $j\in T$ in any order. Since, the procedure for computing the lifting coefficients is similar, we list the coefficients without proof. The lifting coefficient for the variable $x_{ij}$ where $j \in T$ is equal to $f_{ij} + t_j - t_i$. Further simplification using the degree constraint \eqref{eq:f1-degree-t} yields:
\begin{flalign*}
&u_i \leq F - \sum_{j \in V} (t_j + f_{ij}) x_{ij}.&&
\end{flalign*}
\endproof

Now, we will replace the constraints \eqref{eq:f3-fuel1}--\eqref{eq:f3-fuel3} in the node-based formulation $\mathcal F_3$ using its strengthened counterparts, \eqref{eq:f3-fuel4}--\eqref{eq:f3-fuel6}. Then, we observe that for the first target $r$ that any vehicle visits either from the depot or a refueling station $d \in D$, the constraints \eqref{eq:f3-fuel5} and \eqref{eq:f3-fuel6} imply 
\begin{flalign*}
&~f_{dr} \leq u_r \leq F - \sum_{j \in V} (f_{rj} + t_j) x_{rj} & \\
\Rightarrow &~f_{dr} \leq u_r \leq F - \max_{j\in V} (f_{rj} + t_j). & 
\end{flalign*}
This indicates that the value of each auxiliary variable $u_i,~i\in T$ in a solution to the FCAVPP obtained using the new formulation may not provide any meaningful information \cite{Kara2010}. This can be easily rectified by using the following valid inequality
\begin{flalign}
& u_i \leq F - t_i - \sum_{d\in D} (F-t_i - f_{di}) x_{di}. \label{eq:f3-fuel7} &
\end{flalign}
The above valid inequality tightens the upper bound on the auxiliary variable $u_i$, whenever $i$ is the first target visited by any vehicle as it leaves the depot or a refueling station. Hence, in lieu of the constraints \eqref{eq:f3-fuel4}--\eqref{eq:f3-fuel7}, the $u_i$s can be redefined as the total fuel consumed by any vehicle as it reaches the target $i$. The redefined $u_i$s are a more natural choice of decision variables to impose the fuel restrictions of the vehicles. These are the variables that authors in \cite{Sundar2012} and \cite{Mitchell2015} use to formulate the single vehicle and multiple vehicle fuel-constrained vehicle routing problem. But, we will observe in the forthcoming Sec. \ref{sec:results} that this choice is not well suited for obtaining an optimal solution to the FCAVPP.

Now, we present the second node-based formulation as follows:
\begin{flalign}
&(\mathcal F_4) \quad \text{Minimize} \quad \sum_{(i,j)\in E} c_{ij} x_{ij} \notag& \\
&\text{subject to: \eqref{eq:f1-degree-d} -- \eqref{eq:f1-depotconnectivity}, \eqref{eq:f1-integer}, \eqref{eq:f1-fl} -- \eqref{eq:f1-yrelax}, and \eqref{eq:f3-fuel4} -- \eqref{eq:f3-fuel7}} \notag. &
\end{flalign}
The following result follows from Prop. \ref{prop:lifting}.
\begin{corollary} \label{cor:LPnode} $\operatorname{opt}(\mathcal F_4^L) \geq \operatorname{opt}(\mathcal F_3^L)$. \hfill \qed
\end{corollary}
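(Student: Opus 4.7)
The plan is to establish the set containment $\mathcal F_4^L \subseteq \mathcal F_3^L$ at the level of LP feasible regions. Since the two formulations share the same linear objective $\sum_{(i,j)\in E} c_{ij} x_{ij}$, minimizing over a smaller feasible set yields an optimal value that is no smaller, which is exactly the claim.

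First I would observe that both relaxations share the constraints \eqref{eq:f1-degree-d}--\eqref{eq:f1-depotconnectivity}, \eqref{eq:f1-integer}, and \eqref{eq:f1-fl}--\eqref{eq:f1-yrelax}, so they differ only in the fuel block. It therefore suffices to show that every $(x,y,u)$ feasible for $\mathcal F_4^L$ also satisfies \eqref{eq:f3-fuel1}, \eqref{eq:f3-fuel2}, and \eqref{eq:f3-fuel3}. Three short algebraic checks do the job, using the shared constraints together with the lifting analysis of Proposition~\ref{prop:lifting}.

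For \eqref{eq:f3-fuel4} $\Rightarrow$ \eqref{eq:f3-fuel1}, the gap is the single extra term $(M_{ij}-f_{ij}-f_{ji})x_{ji}$, with $M_{ij}-f_{ij}-f_{ji} = F - s_j - t_i - f_{ji}$. If this coefficient is nonnegative, \eqref{eq:f3-fuel1} follows immediately; otherwise the preprocessing rule stated with Proposition~\ref{prop:lifting} forces $x_{ji}=0$ (since then $s_j + f_{ji} + t_i > F$), so the extra term again vanishes. For \eqref{eq:f3-fuel5} $\Rightarrow$ \eqref{eq:f3-fuel2}, I would split the sum over $D$ and $T$ (using $s_d = 0$ for $d\in D$), and then use the degree equation \eqref{eq:f1-degree-t} to replace $1 - \sum_{d\in D} x_{di}$ by $\sum_{j\in T} x_{ji}$; the difference between the two right-hand sides reduces to $\sum_{j\in T}(s_j + f_{ji} - s_i)x_{ji}$, and each coefficient is nonnegative because $s_j + f_{ji} \geq f_{d^\ast j} + f_{ji} \geq f_{d^\ast i} \geq s_i$ by the triangle inequality on $f$, where $d^\ast\in D$ attains the minimum defining $s_j$. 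The implication \eqref{eq:f3-fuel6} $\Rightarrow$ \eqref{eq:f3-fuel3} is entirely symmetric, with $t_i$ in place of $s_i$. The extra constraint \eqref{eq:f3-fuel7} only imposes an additional upper bound on $u_i$ and does not obstruct the containment.

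The only real obstacle is keeping the algebra straight in the middle check, specifically using the in-degree equation \eqref{eq:f1-degree-t} to absorb the $s_i$ offsets introduced by lifting, and verifying that each lifting coefficient is nonnegative. Both issues rest only on the triangle inequality for $f$ (inherited from the assumption on $c$ together with $c = Kf$), the definitions of $s_i$ and $t_i$, and the preprocessing rule; no deeper argument is required.
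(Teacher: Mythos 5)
Your proposal is correct and follows the same route the paper intends: the paper simply asserts that the corollary follows from Proposition~\ref{prop:lifting}, i.e.\ that the lifted constraints \eqref{eq:f3-fuel4}--\eqref{eq:f3-fuel6} (plus the extra bound \eqref{eq:f3-fuel7}) dominate \eqref{eq:f3-fuel1}--\eqref{eq:f3-fuel3}, so the LP feasible region of $\mathcal F_4$ is contained in that of $\mathcal F_3$. Your write-up just makes this explicit, correctly handling the two points the paper leaves implicit: the sign of the lifting coefficient in \eqref{eq:f3-fuel4} via the preprocessing rule $x_{ji}=0$ when $s_j+f_{ji}+t_i>F$, and the nonnegativity of $s_j+f_{ji}-s_i$ and $t_j+f_{ij}-t_i$ via the triangle inequality together with the degree constraints \eqref{eq:f1-degree-t}.
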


\section{Branch-and-cut algorithm} \label{sec:bandc}
In this section, we briefly present the main ingredients of a branch-and-cut algorithm that is used to solve the four different formulations presented in the previous section to optimality. The formulations developed in the Sec. \ref{sec:formulation} can be provided to off-the-shelf commercial branch-and-cut solvers to obtain an optimal solution to the FCAVPP. But, observe that all the formulations contain constraint \eqref{eq:f1-depotconnectivity} to ensure any feasible solution to FCAVPP is connected. The number of such constraints is exponential and it may not be computationally efficient to enumerate all these constraints and provide them to these solvers. To address this issue, we use the following approach: we relax the constraints \eqref{eq:f1-depotconnectivity} from the formulation, and whenever the solver obtains an integer feasible solution to this relaxed problem, we check if any of the constraints \eqref{eq:f1-depotconnectivity} are violated by the integer feasible solution. If so, we add the infeasible constraint and continue solving the original problem. This process of adding constraints to the problem sequentially has been observed to be computationally efficient for the TSP, VRP and a huge number of their variants \cite{Toth2001}. 

Now, we will detail the algorithm used to find a constraint  \eqref{eq:f1-depotconnectivity} that is violated for a given integer feasible solution to the relaxed problem. We will use formulation $\mathcal F_1$ to present the algorithm. The same algorithm can be used for the remaining formulations, without any modifications, to find violated constraints \eqref{eq:f1-depotconnectivity}. A violated constraint \eqref{eq:f1-depotconnectivity} can be described by a subset of vertices $S \subset V\setminus\{d_0\}$ such that $S\cap D \neq \emptyset$ and $x(\delta^+(S)) < y_d$ for every $d \in S\cap D$. Given an integer feasible solution to the relaxed problem, we construct a support graph $G^*$ defined as follows: $G^* = (V^*,E^*)$ where $V^*=T\cup \{d_0\} \cup \{d \in D\setminus\{d_0\}: y_d = 1\}$ and $E^* = \{(i,j) \in E: x_{ij} = 1\}$. We then find the strongly connected components of $G^*$. Every strongly connected component that does not contain the depot is a subset $S$ of $V\setminus\{d_0\}$ which violates the constraint \eqref{eq:f1-depotconnectivity}. We add all these infeasible constraints and continue solving the original problem. Many off-the-shelf commercial solvers \cite{cplex} provide a feature called ``solver callbacks'' to implement such an algorithm into its branch-and-cut framework.

\section{Computational results} \label{sec:results}
In this section, we discuss the computational performance of the branch-and-cut algorithm for all the four formulations presented in the Sec. \ref{sec:formulation}. The mixed-integer linear programs were implemented in Java, using the traditional branch-and-cut framework and the solver callback functionality of CPLEX version 12.6.2. All the simulations were performed on a Dell Precision T5500 workstation (Intel Xeon E5630 processor @2.53 GHz, 12 GB RAM). The computation times reported are expressed in seconds, and we imposed a time limit of 3,600 seconds for each run of the algorithm. The performance of the algorithm was tested with randomly generated test instances. \\

\noindent {\it Instance generation}

The problem instances were randomly generated in a square grid of size [100,100]. The number of refueling stations was set to 4 and the locations of the depot and all the refueling stations were fixed a priori for all the test instances. The number of targets varies from $10$ to $40$ in steps of five, while their locations were uniformly distributed in the square grid; for each $|T| \in \{10,15,20,25,30,25,40\}$, we generated five random instances. For each of the above generated instances, the number of vehicles in the depot was varied from $3$ to $5$, and the fuel capacity of the vehicles, $F$, was varied linearly with a parameter $\lambda$. $\lambda$ is defined as the maximum distance between the depot and any target. The fuel capacity $F$ was assigned a value from the set $\{2.25 \lambda, 2.5 \lambda, 2.75 \lambda, 3\lambda \}$. The travel costs and the fuel consumed to travel between any pair of vertices were assumed to be directly proportional to the Euclidean distances between the pair and rounded down to the nearest integer. In total, our test bed consisted of $420$ instances. Two sets of computational experiments were performed on these instances. The first experiment was aimed at comparing the linear programming relaxations of the four formulations and ensuring that the computational behaviour of the formulations conforms with the corollaries \ref{cor:LP} and \ref{cor:LPnode}. The second set of test runs was aimed at finding the most useful formulation to compute an optimal solution to any instance of the problem using the branch-and-cut algorithm proposed in Sec. \ref{sec:bandc}.  \\

Tables \ref{tab:1} and \ref{tab:2}, and Fig. \ref{fig:times}--\ref{fig:cuts} summarize the computational behavior of the algorithms. The following nomenclature is used throughout the rest of the paper:\medskip{}

\noindent $\#$: instance number;

\noindent $\operatorname{opt}(\mathcal F_i^L)$: linear programming relaxation solution for formulation $i$;

\noindent $n$: instance size \emph{i.e.}, number of targets in the instance;

\noindent $\#$ cuts: number of constraints \eqref{eq:f1-depotconnectivity} violated during a branch-and-cut solve. These violated constraints are computed using the algorithm detailed in Sec. \ref{sec:bandc};

\noindent total: total number of test instances of a given size (the number of targets and the number of vehicles are given);

\noindent succ: number of instances for which optimal solutions were computed within a time limit of 3,600 seconds.  \\

\noindent Table \ref{tab:1} compares the cost of the linear programming (LP) relaxations of the four formulations presented in Sec. \ref{sec:formulation} for the instances with 40 targets and 5 vehicles.  The results in Table \ref{tab:1} provide an empirical comparison of the formulations presented in \ref{sec:formulation}; the observed behavior is expected because the formulations $\mathcal F_2$ and $\mathcal F_4$ are strengthened versions of $\mathcal F_1$ and $\mathcal F_3$, respectively (see corollaries \ref{cor:LP} and \ref{cor:LPnode}). As for the LP relaxations of formulations $\mathcal F_2$ and $\mathcal F_4$, the formulation $\mathcal F_4$ is observed to have a better LP relaxation value for 95\% of the instances. Also, the formulation $\mathcal F_2$ is consistently better than the formulation $\mathcal F_3$ with respect to the strength of the LP relaxation. Hence, the rest of the computational experiments is devoted to comparing the formulations $\mathcal F_2$ and $\mathcal F_4$ with regards to computing an optimal solution to any instance of the FCAVPP.

\begin{table}
\centering
\begin{tabular}{lrrrr}
\toprule
$\#$ & $\operatorname{opt}(\mathcal F_1^L)$ & $\operatorname{opt}(\mathcal F_2^L)$ & $\operatorname{opt}(\mathcal F_3^L)$ & $\operatorname{opt}(\mathcal F_4^L)$ \\
\midrule
1 & 572.051 & 587.347 & 551.014 & 620.000\\
2 & 567.454 & 578.639 & 551.014 & 620.000 \\
3 & 564.260 & 572.561 & 551.014 & 620.000 \\
4 & 561.937 & 568.191 & 551.014 & 620.000 \\
5 & 561.001 & 574.636 & 535.011 & 590.500 \\
6 & 558.195 & 568.381 & 535.011 & 590.500 \\
7 & 556.121 & 563.924 & 535.011 & 590.500 \\
8 & 554.441 & 560.568 & 535.011 & 590.500 \\
9 & 493.041 & 497.721 & 477.507 & 544.000 \\
10 & 489.625 & 493.075 & 477.507 & 544.000 \\
11 & 486.908 & 489.610 & 477.507 & 544.000 \\
12 & 484.697 & 486.902 & 477.507 & 544.000 \\
13 & 604.135 & 617.703 & 574.005 & 611.500 \\
14 & 599.296 & 607.671 & 574.005 & 611.500 \\
15 & 595.086 & 600.576 & 574.005 & 611.500 \\
16 & 591.773 & 596.955 & 574.005 & 611.500 \\
17 & 593.720 & 607.825 & 563.007 & 616.000 \\
18 & 588.627 & 599.340 & 563.007 & 616.000 \\
19 & 584.692 & 592.348 & 563.007 & 616.000 \\
20 & 581.550 & 587.525 & 563.007 & 616.000 \\
\bottomrule
\end{tabular}
\caption{Cost of the LP relaxation for the instances with 40 targets and 5 vehicles.}
\label{tab:1}
\end{table}

Table \ref{tab:2} shows the number of instances of different sizes solved to optimality by the formulations $\mathcal F_2$ and $\mathcal F_4$ within the time limit of 3,600 seconds. For a fixed number of targets $n$, and number of vehicles $m$, the branch-and-cut algorithm for formulations $\mathcal F_2$ and $\mathcal F_4$ was run on a set of 20 instances. The column titled ``succ'' indicates the number of instances, out of 20, that were solved to optimality. We observe that the formulation $\mathcal F_2$ is more successful in computing an optimal solution for the instances with 35 and 40 targets. In total, the formulation $\mathcal F_2$ provided an optimal solution to 399 out of 420 test instances within 3,600 seconds. In contrast, the formulation $\mathcal F_4$ provided an optimal solution only to 331 test instances within the stipulated time. We also note that the formulation $\mathcal F_2$, for the 21 instances that were not solved to optimality, provided feasible solutions that were within 2.5\% of the optimal value, on an average. 

The Fig. \ref{fig:plot} shows an optimal solution that is obtained for a 25-target, 2-vehicle instance. The formulation $\mathcal F_2$ is used to compute the optimal solution and the computation time for the instance is 78.05 seconds.

\begin{figure}
\centering
\includegraphics[scale=0.5]{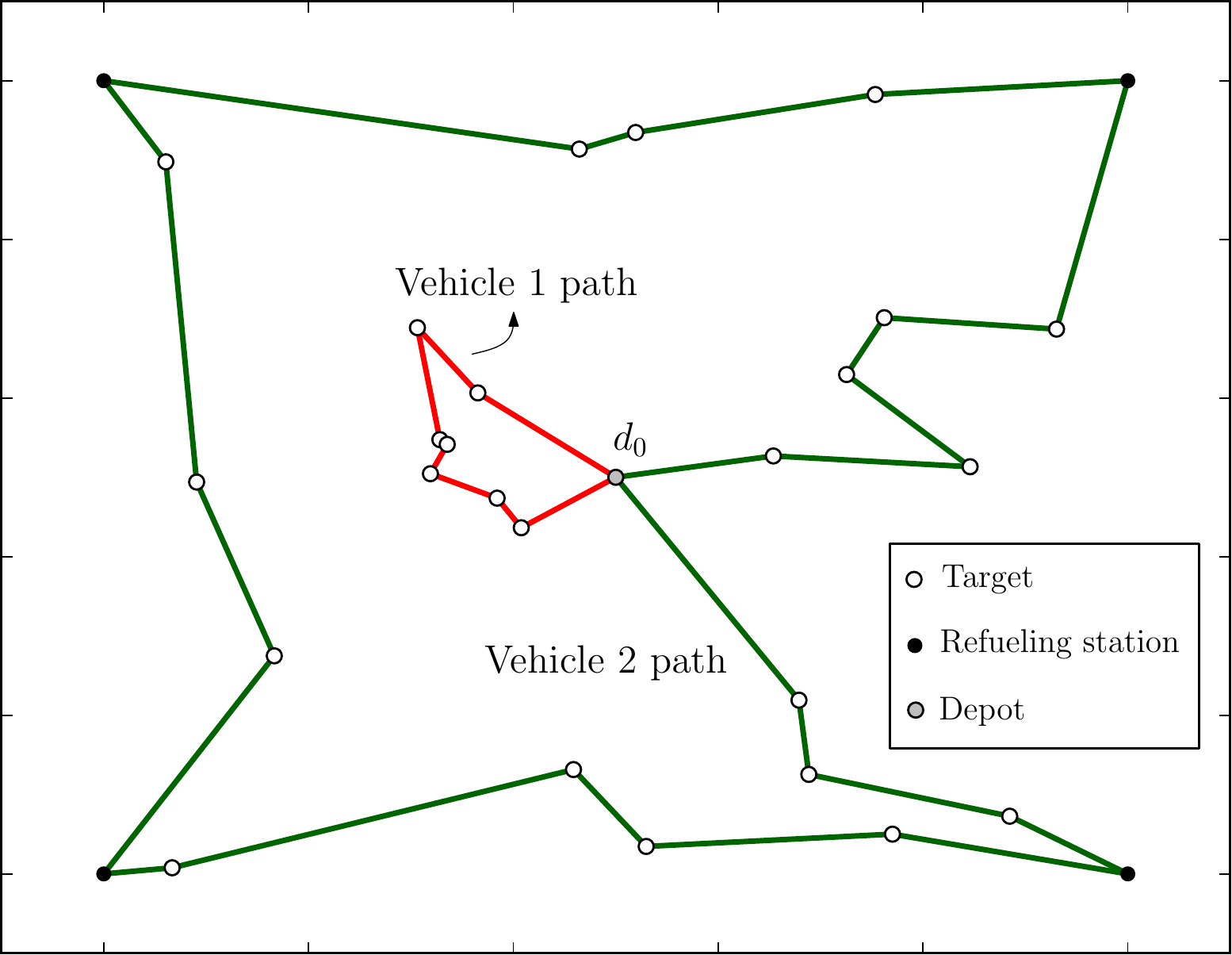}
\caption{Optimal solution for a 25-target, 2-vehicle instance.}
\label{fig:plot}
\end{figure}

Fig. \ref{fig:times} shows the average time taken by the two formulations to compute an optimal solution. The computation time reported in Fig. \ref{fig:times} is averaged over all the instances for a given value of $n$, the number of targets. Both table \ref{tab:2} and Fig. \ref{fig:times} indicate that the arc-based formulation $\mathcal F_2$ outperforms the node-based formulation $\mathcal F_4$ for around 95\% of the instances. The remaining 5\% of the instances on which the two formulations were indistinguishable contained less than 15 targets.

Fig. \ref{fig:times1} and Fig. \ref{fig:cuts} analyse the behaviour of the branch-and-cut algorithm using the arc-based formulation $\mathcal F_2$. Fig. \ref{fig:times1} compares the average time taken to compute an optimal solution to the FCAVPP as the number of vehicles is varied from 3 to 5. The plot indicates that the computation time decreases with the increase in number of vehicles. This is not surprising because the combinatorial difficulty of the problem arises due to the fuel constraints of the vehicles. Since all the vehicles are forced to leave the depot, the fuel constraints tend to be irrelevant with the increase in the number of vehicles. Finally, Fig. \ref{fig:cuts} shows the average number of connectivity constraints \eqref{eq:f1-depotconnectivity}, as the number of vehicles and the number of targets is varied from 3 to 5 and 20 to 40, respectively, that were actually violated and added back to the relaxed formulation by the branch-and-cut algorithm. The plot corroborates the proposed approach of relaxing the exponential number of connectivity constraints \eqref{eq:f1-depotconnectivity} and adding them back the formulation when a violation occurs. We observe that the number of violated constraints \eqref{eq:f1-depotconnectivity} is much less compared to the original number of constraints in the formulation. 

\begin{table}
\centering
\begin{tabular}{lrrrrrrr}
\toprule
 & & \multicolumn{2}{c}{$m=3$} & \multicolumn{2}{c}{$m=4$} & \multicolumn{2}{c}{$m=5$} \\
 \cline{3-8}
 \addlinespace[0.5em]
 & & $\mathcal F_2$ & $\mathcal F_4$ & $\mathcal F_2$ & $\mathcal F_4$ & $\mathcal F_2$ & $\mathcal F_4$ \\
 \addlinespace[0.5em]
 $n$ & total & succ & succ & succ & succ & succ & succ\\
\midrule
10 & 20 & 20 & 20 & 20 & 20 & 20 & 20\\
15 & 20 & 20 & 20 & 20 & 20 & 20 & 20\\
20 & 20 & 20 & 20 & 20 & 20 & 20 & 20\\
25 & 20 & 19 & 18 & 20 & 20 & 20 & 20\\
30 & 20 & 20 & 12 & 20 & 18 & 20 & 20\\
35 & 20 & 16 & 7  & 18 & 10 & 18 & 10\\
40 & 20 & 16 & 3  & 16 & 5  & 16 & 8 \\
\bottomrule
\end{tabular}
\caption{Comparison of formulations $\mathcal F_2$ and $\mathcal F_4$.}
\label{tab:2}
\end{table}

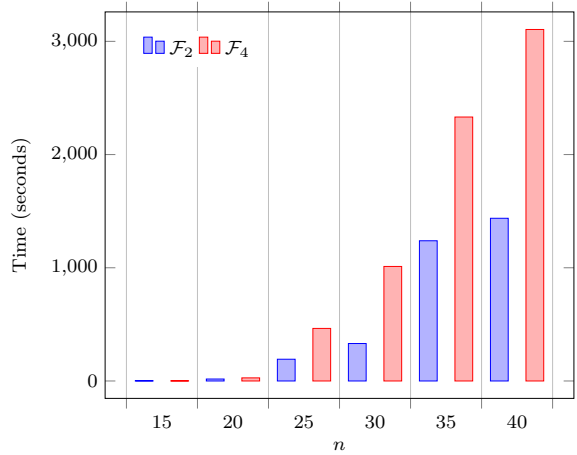
\begin{figure}
	\centering
\begin{tikzpicture}[scale=0.9]
\begin{axis}[
	x tick label style={
		/pgf/number format/1000 sep=},
	ylabel=Time (seconds),
	xlabel=$n$,
	enlargelimits=0.05,
	legend style={at={(0.2,0.95)}, draw=none,
	anchor=north,legend columns=-1},
	ybar interval=0.5,
]
\addplot
	coordinates {(15,1.92) (20,17.29) (25,192.12) (30,330.91) (35,1238.53) (40,1437.18) (45,709)};
\addplot
	coordinates {(15,1.39) (20,27.39) (25,464.30) (30,1011.65) (35,2331.55) (40,3104.88) (45,709)};
\legend{$\mathcal F_2$,$\mathcal F_4$}
\end{axis}
\end{tikzpicture}
\caption{Average time taken to compute an optimal solution for $\mathcal F_2$ and $\mathcal F_4$.}
\label{fig:times}
\end{figure}

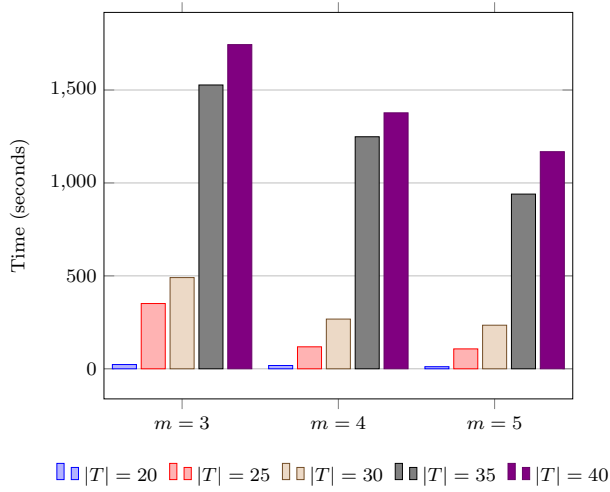
\begin{figure}
	\centering
\begin{tikzpicture}[scale=0.9]
\begin{axis}[
    ybar,
    enlarge x limits=0.25,
    legend style={at={(0.5,-0.15)},draw=none,
      anchor=north,legend columns=-1},
    ylabel={Time (seconds)},
    symbolic x coords={$m=3$,$m=4$,$m=5$},
    xtick=data,
    ymajorgrids,
    ]
\addplot coordinates {($m=3$,22.87) ($m=4$,17.67) ($m=5$,11.31)};
\addplot coordinates {($m=3$,351.12) ($m=4$,118.21) ($m=5$,107.02)};
\addplot coordinates {($m=3$,490.75) ($m=4$,267.48) ($m=5$,234.49)};
\addplot coordinates {($m=3$,1527.10) ($m=4$,1248.51) ($m=5$,939.98)};
\addplot coordinates {($m=3$,1744.27) ($m=4$,1377.23) ($m=5$,1168.16)};
\legend{$|T|=20~$,$|T|=25~$,$|T|=30~$,$|T|=35~$,$|T|=40~$}
\end{axis}
\end{tikzpicture}
\caption{Average time taken to compute the optimal solution for the formulation $\mathcal F_2$.}
\label{fig:times1}
\end{figure}

\begin{figure}
	\centering
\begin{tikzpicture}[scale=0.9]
\begin{axis}[
    ybar,
    enlarge x limits=0.25,
    legend style={at={(0.5,-0.15)},draw=none,
      anchor=north,legend columns=-1},
    ylabel={$\#$ cuts},
    symbolic x coords={$m=3$,$m=4$,$m=5$},
    xtick=data,
    ymajorgrids,
    ]
\addplot coordinates {($m=3$,9) ($m=4$,7) ($m=5$,6)};
\addplot coordinates {($m=3$,14) ($m=4$,11) ($m=5$,10)};
\addplot coordinates {($m=3$,14) ($m=4$,10) ($m=5$,10)};
\addplot coordinates {($m=3$,14) ($m=4$,12) ($m=5$,11)};
\addplot coordinates {($m=3$,15) ($m=4$,13) ($m=5$,10)};
\legend{$|T|=20~$,$|T|=25~$,$|T|=30~$,$|T|=35~$,$|T|=40~$}
\end{axis}
\end{tikzpicture}
\caption{Average number of violated constraints \eqref{eq:f1-depotconnectivity} identified by the branch-and-cut algorithm for the formulation $\mathcal F_2$.}
\label{fig:cuts}
\end{figure}
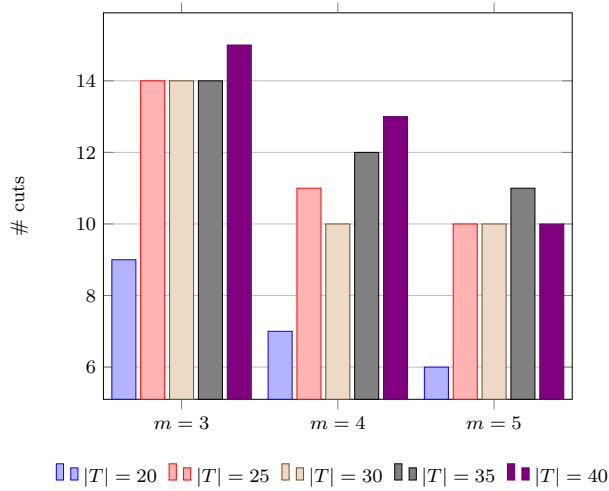

\section{Conclusions and future work \label{sec:conclusion}}
In this paper, we presented four different mixed-integer linear programming formulations for fuel-constrained autonomous vehicle path planning problem. The problem arises frequently in the context of green vehicle routing, routing electric vehicles, and path planning for UAVs. The first two formulations $\mathcal F_1$ and $\mathcal F_2$ are arc-based, and the rest \emph{i.e,} $\mathcal F_3$ and $\mathcal F_4$ are node-based formulations. The arc-based formulations use decision variables on the edges to impose the fuel constraints and the node-based formulations use extra decision variables on the targets to enforce the fuel constraints. Though the node-based formulations are a more natural way of formulating the FCAVPP, the arc-based formulations outperform the node-based formulations with respect to computing an optimal solution to any instance of the problem. Computational experiments on a large number of test instances corroborate this observation. Among the two proposed arc-based formulations, the second formulation $\mathcal F_2$ is shown to be the most effective, both analytically and empirically. Hence, $\mathcal F_2$ can be used as a template for enforcing fuel constraints on any problem that includes similar fuel restrictions on the vehicle. Future work can be directed towards developing similar mixed-integer linear programming formulations and branch-and-cut algorithms to solve a heterogeneous variant of the problem \emph{i.e.,} with vehicles having different fuel capacities.

\bibliographystyle{IEEEtran}
\bibliography{references}

\end{document}